\newcommand{\ve}[1]{{\bf #1}}
\newcommand{\norm}[1]{\left|\left|#1\right|\right|}
\newcommand{\prox}{\text{prox}}
\newtheorem{theorem}{Theorem}
\newtheorem{remark}{Remark}
\newtheorem{property}{Property}
\ifCLASSOPTIONcompsoc \usepackage[caption=false,font=normalsize,labelfon t=sf,textfont=sf]{subfig} \else \usepackage[caption=false,font=footnotesize]{subfi
g} \fi
\begin{document}

\sloppy
\IEEEoverridecommandlockouts
\title{A Sequential Approximation Framework for Coded Distributed Optimization}


\author{

Jingge Zhu, Ye Pu, Vipul Gupta, Claire Tomlin, Kannan Ramchandran\\

EECS,   University of California, Berkeley, CA, USA. \\
      Email: \{jingge.zhu, yepu, vipul\_gupta, tomlin, kannanr\}@eecs.berkeley.edu 

\thanks{The work of Jingge Zhu was supported by the Swiss National Science Foundation under Project P2ELP2\_165137. The work of Ye Pu  was supported by the Swiss National Science Foundation under Project P2ELP2\_165155.}

%
%
}




\maketitle

\begin{abstract}

Building on the previous work of Lee \textit{et al.} \cite{lee_speeding_2015} and Ferdinand \textit{et al.} \cite{ferdinand_anytime_2016} on coded computation, we propose a sequential approximation framework for solving optimization problems in a distributed manner.   In a distributed computation system, latency caused by individual processors (``stragglers") usually causes a significant delay in the overall process.  The proposed method is powered by a sequential computation scheme, which is designed specifically for systems with stragglers. This scheme has the desirable property that the user is guaranteed to receive useful (approximate) computation results whenever a processor finishes its subtask, even in the presence of uncertain latency.  In this paper, we give a coding theorem for sequentially computing matrix-vector multiplications, and the optimality of this coding scheme is also established. As an application of the results, we demonstrate solving optimization problems using a sequential approximation approach, which accelerates the algorithm in a distributed system with stragglers.  

\end{abstract}

\section{Introduction}
\label{sec:intro}

Emerging applications from social networks and machine learning make distributed computation systems increasingly important for handling large-scale computation tasks.  In this framework, a large computation task is divided into several smaller sub-tasks, each of which is dispatched to a different processor. The computation results are then aggregated and processed to produce the final result.  A central challenge to this approach is how to handle  uncertainty caused by ``system noise" (see, 	e. g. \cite{dean_tail_2013}). One notable phenomenon is the ``straggler" effect, namely, the latency of a single processor could cause a significant delay in the whole computational task. In existing distributed computation schemes, various straggler-detecting algorithms have been proposed to mitigate this problem. For example, Hadoop detects stragglers when executing the computation. When it detects a straggler, it runs a copy of the ``straggled" task on a different processor. However, running many replications of subtasks turn out to be  inefficient.

A novel approach to mitigating uncertainty is to add controlled redundancy in the distributed computation tasks.  Lee \textit{et al.} \cite{lee_speeding_2015} proposed a coded computation framework for computing matrix-vector multiplications in a distributed system.  By using maximum distance separable (MDS) codes to encode the matrix and distribute smaller computation tasks to different processors, they show that coded computation can derive significant gains over na\"{i}ve replication methods in terms of computation time. Based on the same idea, Ferdinand and Draper  \cite{ferdinand_anytime_2016} proposed  a refined coding scheme (called ``anytime coding scheme") where an approximation of the matrix-vector multiplication can be obtained  in a timely fashion.  We  point out that the coded computation scheme has also been extended to study matrix multiplication problems~\cite{yu_polynomial_2017}, and is shown to be useful for reducing the communication overhead in distributed systems \cite{li_coded_2015}. Furthermore, the idea of using codes in distributed systems has found  various applications in machine learning problems, as shown in \cite{tandon_gradient_2016} \cite{dutta_short-dot} \cite{Karakus_2017}.

In this paper, we take a step further in studying how to tackle  optimization problems using a coded computation approach. Building on the work of \cite{lee_speeding_2015} and \cite{ferdinand_anytime_2016}, we propose a \textit{sequential approximation method} for solving optimization problems. The basic idea of this approach is that  instead of directly solving the original problem, we solve a sequence of optimization problems (called \textit{approximations}), whose solutions gracefully approach the solution of the original problem. These approximations need to be designed judiciously so that  solving the approximate problems requires less computation time than solving the original problem.  Consequently as we show in the sequel, in the presence of stragglers, the sequential approximation method typically takes less time to find the solution  of the original problem. 
The saving on execution time is more significant if we only aim to find an approximate solution to the original problem. An attractive feature of the proposed method is that the processors in the distributed system are oblivious to different approximations, making it a user-centered design.

The driving mechanism for the proposed sequential approximation framework is a so-called \textit{coded sequential computation scheme}, designed specifically for  distributed computation systems with latency. It has the desirable property that the user is guaranteed to receive useful (approximate) computation results whenever a subset of processors finish their subtasks, even in the presence of uncertain latency. In this paper, we focus our study on a   coding scheme for sequentially computing  matrix-vector multiplication, which is a basic building block for most algorithms. We then show how to integrate our coded sequential  computation scheme into the sequential approximation framework in order to accelerate the algorithm in the distributed computation system.

\section{The Sequential Approximation Method: an Overview}
\label{sec:overview}

We consider solving an optimization problem of the following form
\begin{align}
\text{minimize } (1/2)\ve x^T\ve H\ve x+\ve h^T\ve x+g(\ve x)
\label{eq:objective}
\end{align}
where $\ve H\in\mathbb R^{m\times m}$ is positive (semi)-definite matrix and $g:\mathbb R^m\mapsto \mathbb R$  is a closed, proper and convex function.   The formulation in (\ref{eq:objective}) represents a large class of problems of interests. For example,  choosing $g(\ve x)$ to be the indicator function\footnote{The indicator function $\ve 1_{\ve x\in\mathbb K}$ evaluates to $0$ if $\ve x\in\mathbb K$ and to $\infty$ if $\ve x\notin\mathbb K$.} $g(\ve x) = \ve 1_{\ve x\in\mathbb K}$ converts problem (\ref{eq:objective}) into  a constrained optimization problem with a qudratic objective function, where $\mathbb K\subset\mathbb R^n$ is a  convex set. Choosing $g(\ve x)$ to be a norm of $\ve x$ is also widely used in applications. For example, the choice $g(\ve x) = \gamma\norm{\ve x}_1$ converts (\ref{eq:objective}) to the lasso problem. 

\textit{Alternating methods} \cite{} are efficient optimization methods used to solve  problems of the form (\ref{eq:objective}). The proximal gradient method \cite{parikh_proximal_2014} and ADMM are two examples of the alternating methods. For instance, the proximal gradient method update the variable $\ve x$ as
\begin{align}
\ve x^{k+1}&=\prox_{t^kg}(\ve x^{k}-t^k (\ve H\ve x^{k}+\ve h))\label{eq:proximal}
\end{align}
where $\ve x^{k}$ and $t^k$ denote the variable and the step size in the $k$-th iteration, respectively. The proximal operator $\prox$ is defined as
\begin{align*}
\prox_{tg}(\ve v):=\text{argmin}_{\ve x}(g(\ve x)+(1/2t)\norm{\ve x-\ve v}_2^2).
\end{align*}
The update rule of ADMM is given as
\begin{align}
\ve x^{k+1}&=\text{argmin}_{\ve x} \frac{1}{2}\ve x^{T}\ve H\ve x+\ve h^T\ve x+\ve y^{kT}(\ve x-\ve z^k)\nonumber\\
&+\frac{t^k}{2}\norm{\ve x-\ve z^k}_2^2\label{eq:ADMM}\\
\ve z^{k+1}&=\text{argmin}_{\ve z} g(\ve z)+\ve y^{kT}(\ve x^{k+1}-\ve z)+\frac{t^k}{2}\norm{\ve x^{k+1}-\ve z}_2^2\nonumber\\
\ve y^{k+1}&=\ve y^k+t^k(\ve x^{k+1}-\ve z^{k+1})\nonumber
\end{align}
where $\ve y$ denotes the Langrangian multipler. To solve the first iteration in (\ref{eq:ADMM}) using first order methods, we need to compute $\ve H\ve x^k$ for each steps.

From the above expressions,  it can be argued that the matrix-vector multiplications $\ve H \ve  x^k$ is (one of) the computationally most expensive operations\footnote{Indeed, for example the proximal operator $\prox$ could be very simple for many problems of interests.} in this algorithm for each step $k$.  With the focus on the matrix-vector multiplication, we will denote the update rule in (\ref{eq:proximal}) or (\ref{eq:ADMM}) simply as $\ve x^{k+1}=f(\ve H\ve x^k)$ in the sequel.

In modern large-scale machine learning problems, the matrix $\ve H$ can be very large so that  computing the matrix multiplication (or even storing the matrix) in one processor is not feasible. In order to handle such large-scale problems,  we turn to a distributed computation paradigm where the task of computing $\ve H\ve x^k$ is collaboratively accomplished by several processors.   As discussed in Introduction,  the uncertainty (latency for example) of the individual  processor could be detrimental to the distributed computation system and renders the distributed computation approach unusable. To alleviate this problem, previous works (e.g. \cite{lee_speeding_2015} \cite{ferdinand_anytime_2016} \cite{tandon_gradient_2016}) have proposed coded computation schemes by adding redundancy in the computation tasks. We give a very simple example to illustrate the idea of coded computation. The matrix $\ve H$ is vertically split into two smaller matrices $\ve H_1, \ve H_2$. We use three processors to store $\ve H_1,\ve H_2, \ve H_1+\ve H_2$ separately, and each processor performs a smaller matrix-vector multiplication. It is easy to see that with \textit{any} two of three multiplicaitons  $\ve H_1\ve x, \ve H_2\ve x, (\ve H_1+\ve H_2)\ve x$, the user is able to recover $\ve H\ve x$. The same idea can be applied to a general setting with more users.

In this work, we take a step further to combine a coded sequential computation scheme with a modified algorithm. In particular, we propose the sequential approximation algorithm  shown in Algorithm \ref{al:sequential_approximation} for solving the problem in (\ref{eq:objective}).  In contrast to the original algorithm, Algorithm \ref{al:sequential_approximation} executes a \textit{sequence} of approximated problems (called ``approximations"). Each approximation is in the same form of the original problem, but with a different choice of the matrix $\ve H_{(r)}$. Notice that in order to obtain the correct solution in the end, the last approximation matrix $\ve H_{(R)}$ should be equal to $\ve H$. Algorithm \ref{al:sequential_approximation} should possess the following two properties to be useful:
\begin{itemize}
\item 1) Executing the approximations $f(\ve H_{(r)}\ve x^k)$ should be faster than executing the original iteration $f(\ve H\ve x^k)$ in a distributed system.
\item 2) The matrix $\ve H_{(r)}$ approaches the original matrix $\ve H$ as $r$ increases.
\end{itemize}

Property 1) ensures the proposed algorithm is faster  than the original algorithm in the approximation phases, and  Property 2) guarantees that Algorithm \ref{al:sequential_approximation} eventually provides a solution which is close or identical to the solution of the original problem. 

\begin{algorithm}
\begin{algorithmic} 
\Require $\ve H_{(1)},\ldots, \ve H_{(R)}$ are $R$ matrices which approximate $\ve H$ with an increasing accuracy. 
 \For {$k=1,\ldots, T_1$}  
\State $\ve x^{k+1}=f(\ve H_{(1)}\ve x^k)$ 
\EndFor \Comment{first approx. for $T_1$ iterations}
\For {$k=T_1+1,\ldots, T_1+T_2$}
\State $\ve x^{k+1}=f(\ve H_{(2)}\ve x^k)$ 
\EndFor \Comment{second approx. for $T_2$ iterations}

\quad \vdots

\quad \vdots

\For {$k=\sum_{r=1}^{R-1}T_{r}+1,\ldots, \sum_{r=1}^{R} T_r$}
\State $\ve x^{k+1}=f(\ve H_{(R)}\ve x^k)$ 
\EndFor
\end{algorithmic} 
\caption{A sequential approximation  for solving (\ref{eq:objective}). The function $f(\cdot)$ represents the updating rule in (\ref{eq:proximal}) or (\ref{eq:ADMM})} 
\label{al:sequential_approximation}
\end{algorithm}

An illustration of the sequential approximation approach is given in Fig. \ref{fig:path}. The black trajectory on the botten \footnote{Notice that the actual trajectory of $\ve x^k$ is not necessarily a straight line in $\mathbb R^n$. The plot is only an illustration.} represents the path of $\ve x^k$ when the variable is updated using the exact computation $\ve H\ve x$, and the  colored ``detour" represents the trajectory using the sequential approximation method.

\begin{remark}
It is natural to ask if the sequential approximation method in Algorithm \ref{al:sequential_approximation} is already useful in a system without stragglers, where computing $\ve H_{(r)}\ve x$ takes the same amount of time as computing $\ve H\ve x$. A preliminary investigation suggests that it will depend on both the algorithm and the optimization problem (e.g. the condition number of the matrix). For certain problems, the sequential approximation approach can indeed provide a better convergence rate even for systems without stragglers. The results will be reported in our future work.
\end{remark}

\begin{figure}
\centering
\includegraphics[scale=0.33]{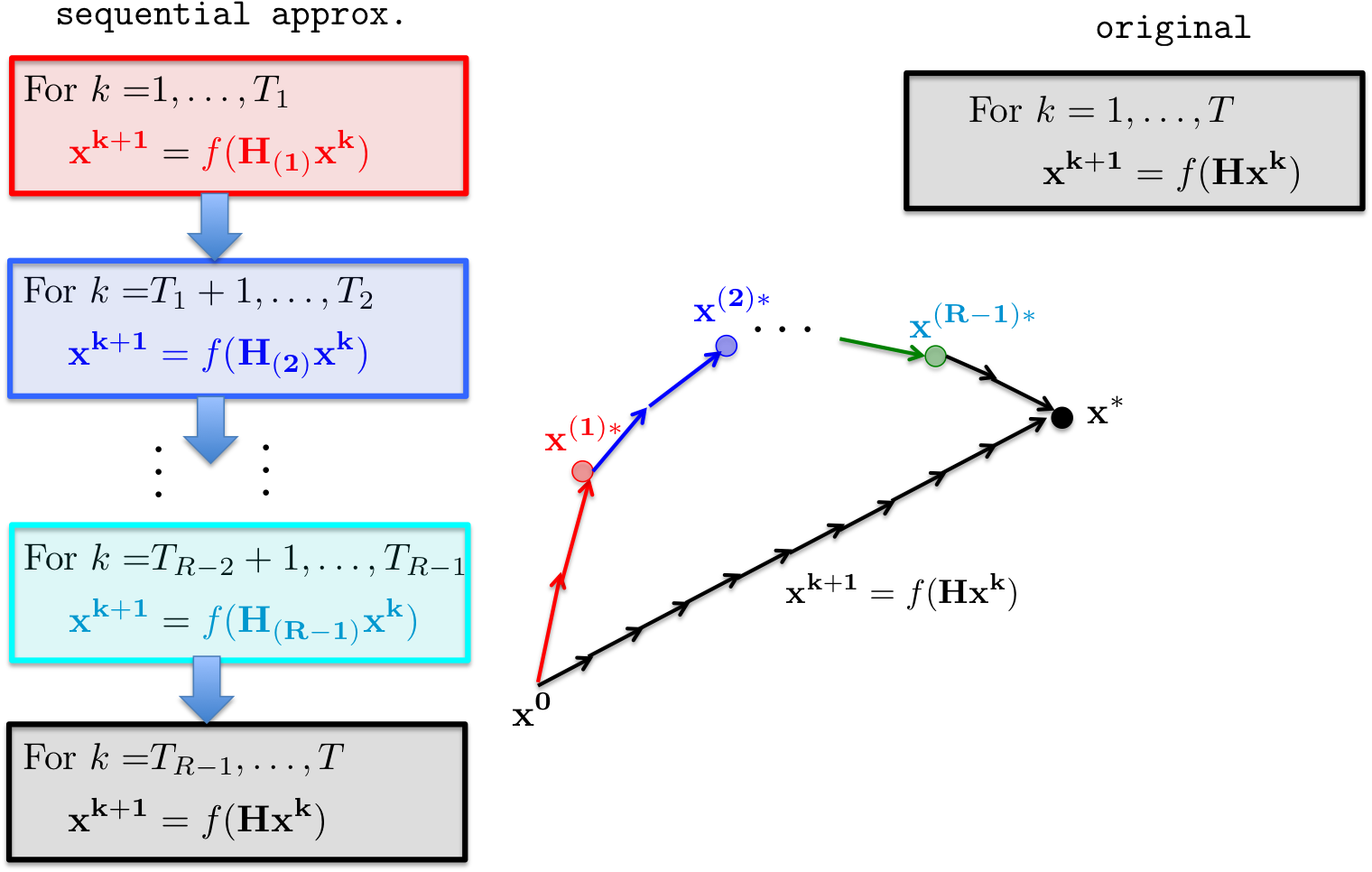}
\caption{An illustration of the sequential approximation approach. The black path on the botten represents the path of $\ve x^k$ when the variable is updated using the exact computation $\ve H\ve x$, and the  colored ``detour" represents the trajectory using sequential approximation. During the $r$-th approximation phase, the variable $\ve x^k$ converges to the point $x^{(r)*}$, which is the optimal solution to the optimization problem (\ref{eq:objective}) with $\ve H$ replaced by $\ve H_{(r)}$. The sequential approximation approach could be faster in a distributed computation system if each iteration in the ``detour" takes less computation time.}
\label{fig:path}
\end{figure}

\section{Coding for Distributed Sequential Matrix-Vector Multiplication }
\label{sec:coding}
A coded sequential computation scheme for the distributed system is the key mechanism behind the sequential approximation method.  In this section, we formally introduce the coded sequential computation problem. It can be viewed as a general problem formulation of the anytime coding scheme studied in \cite{ferdinand_anytime_2016}.

Consider a system with $L$ processors where each processor  performs a matrix-vector multiplication of the form $\ve B\ve z$. The matrix  $\ve B$ is of dimension $n\times m$ and $\ve z$ is a $m$-length vector. Let $\ve A_1,\ldots, \ve A_L$ be $L$ matrices prescribed by the user   
where the matrix $\ve A_i$ is of dimension $k_i\times m$.  The goal is to compute the matrix-vector multiplications $\ve A_i\ve z, i=1,\ldots, L$ for a vector $\ve z$ using these $L$ processors. To perform the computation in a distributed manner, $L$ matrices $\ve B_i\in\mathbb R^{n\times m}, i=1,\ldots, L$ are generated based on the given matrices $\ve A_i\in\mathbb R^{k_i\times m}, i=1,\ldots L$, namely
\begin{align}
\ve B_i=\mathcal E_i(\ve A_1,\ldots, \ve A_L)
\label{eq:encoder}
\end{align}
where $\mathcal E_i$ denotes a mapping $\prod_j\mathbb R^{k_j\times m}\mapsto \mathbb R^{n\times m}$. The matrix $\ve B_i$ is stored in the $i$-th  processor. To compute the multiplication, a vector $\ve z$ is given to all processors, and each processor returns the result $
\ve y_i=\ve B_i\ve z$ to the user when it finishes  the computation. The user then applies a decoder $\mathcal D$ to obtain desired multiplications $\ve A_i\ve z$ using the received results $\ve y_i$. We would like to design our encoders $\mathcal E_i$ and $\mathcal D$ such that the system has the following property.
\begin{property}[Sequential computation]
With the computation results from \textit{any} $\ell$ processors ($1\leq \ell\leq L$), $\ve y^{(1)}:=\ve B_{i_1}\ve z, \ldots, \ve y^{(\ell)}:=\ve B_{i_\ell}\ve z$ for some $i_1,\ldots,i_\ell\in[L]$, the user can recover $\ve A_1\ve z, \ldots, \ve A_\ell \ve z$ where $\ve A_i\in\mathbb R^{k_i\times m}$.
\label{property:sequential}
\end{property}

\begin{figure}[htb]
\centering
\includegraphics[scale=0.5]{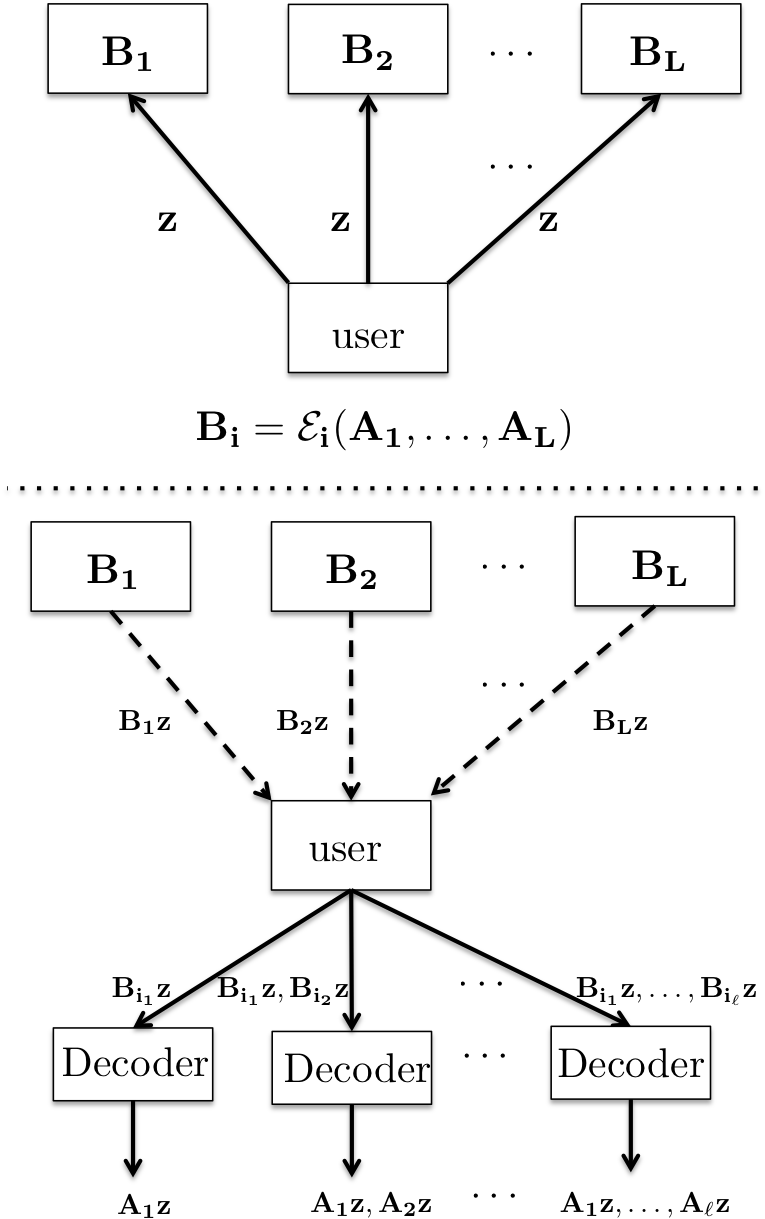}
\caption{An illustration for distributed sequential matrix-vector multiplication.  $L$ matrices $\ve B_1,\ldots, \ve B_L$ are generated based on the prescribed matrices $\ve A_1,\ldots, \ve A_L$. The user first sends a vector $\ve z$ to all processors (top). Whenever a processor returns the computation result $\ve B_{i_\ell}\ve z$, the user can recover an additional matrix-vector multiplication $\ve A_\ell \ve z$ (bottom), $\ell = 1,\ldots, L$. }
\label{fig:system}
\end{figure}

Figure \ref{fig:system} gives an illustration of the sequential computation scheme of the distributed computation system. If a coding scheme satisifes the above sequential computation property, the corresponding value $(k_1,\ldots, k_L)$ is called a  \textit{feasible configuration}. The question is that given a distributed computation system with parameters $(L, n)$\footnote{As we shall see, the parameter $m$ can be chosen arbitrarily.}, what are the feasible configurations (possible values of $k_1,\ldots, k_L$), and how do we design the encoders $\mathcal E_i$ and the decoder $\mathcal D$ for such a system.

This sequential computation scheme is useful for distributed computation systems with stragglers because it guarantees that any finished processor will provide  useful results. Moreover, we could choose the matrices such that $\ve A_i\ve z$ is more crucial than $\ve A_j\ve z$ for our application if $i\leq j$, such that  ``more important" results are received earlier. As pointed out in \cite{ferdinand_anytime_2016}, this coding scheme can also be viewed as an approximation method for computing the multiplication $\ve A\ve z$, where the accuracy increases gradually as more and more processors finish their tasks.

\subsection{The coding scheme}
\label{sec:coding_scheme}

In this section, we give a coding scheme for the sequential distributed matrix-vector multiplication problem.   This scheme is a generalization of the MDS codes based scheme used in \cite{lee_speeding_2015}, and uses essentially the same idea as in \cite{puri_multiple_1999} (multiple description coding) and \cite{ferdinand_anytime_2016}. 

\textbf{Coding scheme:}
For each $i$, we divid the matrix $\ve A_i$ vertically into at most $\lfloor k_i/i\rfloor+1$ submatrices as follows
\begin{align*}
&\ve A_i^{(1)}:=
\begin{pmatrix}
\ve A_{i,1}\\
\vdots\\
\ve A_{i,i}
\end{pmatrix},
\ve A_i^{(2)}:=
\begin{pmatrix}
\ve A_{i,i+1}\\
\vdots\\
\ve A_{i,2i}
\end{pmatrix},
\ldots,\\
&\ve A_i^{ (\lfloor k_i/i \rfloor)} :=
\begin{pmatrix}
\ve A_{i,\lfloor k_i/i\rfloor (i-1) +1}\\
\vdots\\
\ve  A_{i,\lfloor k_i/i\rfloor i}
\end{pmatrix},\\
&\ve A_i^{(\lfloor k_i/i \rfloor+1)}:=
\begin{pmatrix}
\ve A_{i,\lfloor k_i/i\rfloor +1}\\
\vdots\\
A_{i,\lfloor k_i/i\rfloor +mod(k_i,i)}
\end{pmatrix}
\end{align*}
where $\ve A_{i,j}$ denotes the $j$-th row of the matrix $\ve A_i$. In the case when $i$ divides $k_i$, we do not have the last matrix $\ve A_i^{(\lfloor k_i/i \rfloor+1)}$.

For each matrix $\ve A_i^{(j)}\in\mathbb R^{i\times m}$ where  $j=1,\ldots, \lfloor k_i/i \rfloor$, we encode its rows to form a new matrix $\ve {\tilde A}_i^{(j)}\in\mathbb R^{L\times m}$. Particularly, we use a systematic MDS code, such that the first $i$ rows of $\ve {\tilde A}_i^{(j)}$ is identical to $\ve A_i^{(j)}$, and the last $L-i$ rows are linear combinations of the rows of $\ve A_i^{(j)}$ (i.e., parity checks).  If $i$ does not divide $k_i$, the rows of the last matrix $\ve A_i^{(\lfloor k_i/i \rfloor+1)}\in\mathbb R^{mod(k_i,i)\times m}$ is encoded with an $(L-i+mod(k_i,i), mod(k_i,i))$ MDS code into a new matrix $\ve{\tilde A}_i^{\lfloor k_i/i \rfloor+1}\in\mathbb R^{(L-i+mod(k_i,i))\times m}$.

The matrices $\ve B_i$ are generated using the encoded matrices $\ve {\tilde A}_i^{(j)}$. More precisely, each matrix $\ve B_i$ contains exactly one row of the matrix $\ve{\tilde A}_i^{(j)}$, for $j=1,\ldots, \lfloor k_i/i \rfloor$ and for all $i=1,\ldots, L$. If we have the extra matrix $\ve{\tilde A}_i^{\lfloor k_i/i \rfloor+1}$ for certain $i$, its $L-i+mod(k_i,i)$ rows are distributed to arbitrary $L-i+mod(k_i,i)$ matrices $\ve B_i$.

\textbf{Example:} We apply the above coding scheme to a system with parameters $L=4, n=3$, and the configuration $(k_1=0, k_2=3, k_3=3, k_4=1)$. There are three matrices $\ve A_2\in\mathbb R^{3\times m}, \ve A_3\in\mathbb R^{3\times m}$ and $\ve A_4\in\mathbb R^{1\times m}$ to encode ($\ve A_1$ is zero in this case).  The proposed coding scheme generates matrices $\ve B_i$ as follows:
\begin{align*}
&\ve B_1=\begin{pmatrix}
{\color{black}\ve A_{21}}\\
{\color{black}\ve A_{23}}\\
{\color{black}\ve A_{32}}
\end{pmatrix}, 
\ve B_2=\begin{pmatrix}
{\color{black}\ve A_{22}}\\
{\color{black}\ve A_{23}}\\
{\color{black}\ve A_{33}}
\end{pmatrix}, 
\ve B_3=\begin{pmatrix}
{\color{black}\ve A_{21}+\ve A_{22}}\\
{\color{black} \ve A_{23}}\\
{\color{black}\ve A_{31}+\ve A_{32}+\ve A_{33}}
\end{pmatrix}, \\
&\ve B_4=\begin{pmatrix}
{\color{black} \ve A_{21}+\ve A_{22}}\\
{\color{black} \ve A_{31}}\\
{\color{black}\ve A_4}
\end{pmatrix}.
\end{align*}
It can be checked that we can recover $\ve A_2\ve z$  if we have $\ve B_i\ve z, \ve B_j\ve z$ for any $i,j$,  recover $\ve A_3\ve z$ if we have $\ve B_i\ve z, \ve B_j\ve z, \ve B_k\ve z$ for any $i, j, k$ and recover $\ve A_4\ve z$  with all the compuation results.

\begin{theorem}[Coding scheme]
Consider the distributed sequential matrix-vector multiplication problem with parameters $(n,L)$. The configuration $(k_1,\ldots, k_L)$ is feasible if it satisfies
\begin{align}
\sum_{i=1}^L s_i\leq nL
\label{eq:achievable}
\end{align}
where $s_i$ is defined as
\begin{align}
s_i:=
\begin{cases}
\frac{k_i}{i}L \quad &\text{$i$ divides $k_i$}\\
\lfloor\frac{k_i}{i}\rfloor \cdot L+ L-i+mod (k_i,i)\quad &\text{otherwise}
\end{cases}
\label{eq:si}
\end{align}
\label{thm:coding_scheme}
\end{theorem}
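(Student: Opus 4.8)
The plan is to show that the coding scheme described above satisfies the sequential-computation property (Property~\ref{property:sequential}) whenever (\ref{eq:achievable}) holds. I would split the argument into two parts: first, a counting argument showing that $\sum_{i=1}^L s_i$ is exactly the number of coded rows the scheme must store, so that (\ref{eq:achievable}) lets those rows be placed in the $L$ processors; and second, a recoverability argument showing that Property~\ref{property:sequential} then follows from the MDS property of the component codes.

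For the counting part, I would first check that $s_i$ in (\ref{eq:si}) equals the total number of rows in all the encoded matrices $\ve{\tilde A}_i^{(1)},\ve{\tilde A}_i^{(2)},\ldots$ built from $\ve A_i$: each of the $\lfloor k_i/i\rfloor$ full blocks $\ve A_i^{(j)}$ is expanded by a systematic $(L,i)$ MDS code into $L$ rows, and, when $i\nmid k_i$, the leftover block of $\mathrm{mod}(k_i,i)$ rows is expanded by an $(L-i+\mathrm{mod}(k_i,i),\,\mathrm{mod}(k_i,i))$ MDS code into $L-i+\mathrm{mod}(k_i,i)$ rows; summing the two cases gives $s_i$. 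Since every $\ve B_i$ has $n$ rows, there are $nL$ row slots available, so (\ref{eq:achievable}) is necessary for this scheme to be well defined. For sufficiency of the placement, note that (\ref{eq:achievable}) already forces $\sum_i\lfloor k_i/i\rfloor\le n$ (because $s_i\ge \lfloor k_i/i\rfloor L$), so the ``mandatory'' load of one row of each full-block matrix per processor fits; the remaining leftover rows, at most $(n-\sum_i\lfloor k_i/i\rfloor)L$ in number, each leftover matrix contributing fewer than $L$ of them and needing them on distinct processors, can then be distributed without exceeding the per-processor budget $n-\sum_i\lfloor k_i/i\rfloor$ by a standard degree-constrained bipartite realization (Gale--Ryser / max-flow) argument.

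For the recoverability part, fix any $\ell$ processors with outputs $\ve B_{i_1}\ve z,\ldots,\ve B_{i_\ell}\ve z$ and any index $i\le\ell$; I would show $\ve A_i\ve z$ can be reconstructed block by block. For a full block $\ve A_i^{(j)}$, every processor holds exactly one, and distinct processors hold distinct, rows of $\ve{\tilde A}_i^{(j)}$, so the $\ell\ge i$ chosen processors deliver at least $i$ coded evaluations of the $(L,i)$ code applied to $\ve A_i^{(j)}$, which determine $\ve A_i^{(j)}\ve z$ via a fixed linear decoder. For the leftover block, its $L-i+\mathrm{mod}(k_i,i)$ coded rows sit on that many processors, so at most $i-\mathrm{mod}(k_i,i)$ of the chosen processors carry none of them, hence at least $\ell-(i-\mathrm{mod}(k_i,i))\ge\mathrm{mod}(k_i,i)$ carry one --- exactly the number of evaluations the $(L-i+\mathrm{mod}(k_i,i),\mathrm{mod}(k_i,i))$ code needs to recover $\ve A_i^{(\lfloor k_i/i\rfloor+1)}\ve z$. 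Hence $\ve A_i\ve z$ is recoverable from \emph{any} $i$, and a fortiori from any $\ell\ge i$, processors; since this holds simultaneously for all $i\le\ell$, Property~\ref{property:sequential} holds and $(k_1,\ldots,k_L)$ is feasible.

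I expect the combinatorial placement step to be the only real obstacle: the decoding claims are immediate from the MDS property once one tracks which processor stores which coded row, but arguing in full generality that the $\sum_i s_i$ coded rows fit into the $L\times n$ array of processor rows --- respecting that rows of one encoded matrix must go to distinct processors, and that each full block already pins one slot in every processor --- needs a degree-sequence / flow argument rather than a one-line inequality. One could instead give an explicit round-robin placement generalizing the worked example, but making that precise for arbitrary configurations is the delicate point.
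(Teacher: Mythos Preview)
Your proposal is correct and follows essentially the same route as the paper: the paper also counts the $s_i$ coded rows produced by the full-block $(L,i)$ MDS codes and the leftover $(L-i+\mathrm{mod}(k_i,i),\mathrm{mod}(k_i,i))$ code, and then argues recoverability of $\ve A_i\ve z$ from any $i$ processors via exactly your worst-case count (at most $i-\mathrm{mod}(k_i,i)$ of the $i$ processors miss the leftover). The one point where you are more careful than the paper is the placement step---the paper simply asserts that $\sum_i s_i\le nL$ suffices to ``accommodate'' all coded rows, without checking the per-processor budget, whereas you correctly isolate this as a degree-constrained bipartite realization problem; your Gale--Ryser/flow argument is a sound way to close that gap.
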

\begin{proof}
The proof is given in Appendix.
\end{proof}

\begin{remark}
We point out that this result is a generalization of the coding scheme using a single MDS code proposed in \cite{lee_speeding_2015}, which can be seen as a special configuration with $(k_1=0,\ldots, k_{\ell-1}=0, k_\ell=\ell n, k_{\ell+1}=0,\ldots, k_L=0)$ for some $\ell$.
\end{remark}

\begin{remark}  [Complexity of decoding] With the computation results from $\ell$ processors, the decoding process at the user is equivalent to solving a linear system of at most $n\ell$  unknowns, which does not depend on $m$ (number of columns of the matrices). Hence this coding scheme is most beneficial for computing matrix-vector multiplications when the number of rows of matrices is very large. Moreover, using MDS codes with special structures (Reed-Solomon codes for example), the decoding process is often much simpler than solving a generic linear system. 
\end{remark}

We can show that if we restrict  ourselves to linear coding schemes (i.e., the encoder in (\ref{eq:encoder}) is a linear function of $\ve A_1,\ldots, \ve A_L$), then the coding method in Section \ref{sec:coding_scheme} is the best possible. This result establishes the optimality of our coding scheme in the previous section.

\begin{theorem}[Converse for linear  schemes] Consider the distributed sequential matrix multiplication problem with parameters $(n,L)$. Under linear coding schemes, any feasible configuration $(k_1,\ldots, k_L)$ must satisfy the constraint (\ref{eq:achievable}).
\label{thm:converse}
\end{theorem}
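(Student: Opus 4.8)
The plan is to recast the problem in terms of subspaces and then induct on the number of ``layers'' of the configuration. First I would reduce to a clean linear-algebra statement. Since $m$ is arbitrary and a coding scheme must succeed for a generic choice of $\ve A_1,\ldots,\ve A_L$, take $m\ge K:=\sum_j k_j$ and the $\ve A_i$ so that the rows of $[\ve A_1;\cdots;\ve A_L]$ (rows stacked) are linearly independent; identifying their span with $\mathbb R^K$, the rows of $\ve A_i$ span the coordinate block $U^{(i)}:=\mathrm{span}\{e_{K_{i-1}+1},\ldots,e_{K_i}\}$ with $K_\ell:=k_1+\cdots+k_\ell$. A linear encoder then has the form $\ve B_i=\ve G_i[\ve A_1;\cdots;\ve A_L]$ for some $\ve G_i\in\mathbb R^{n\times K}$; writing $R_i:=\mathrm{rowspace}(\ve G_i)$ (so $\dim R_i\le n$), the decoder can reconstruct $\ve A_1\ve z,\ldots,\ve A_\ell\ve z$ from $\{\ve B_i\ve z:i\in S\}$ for every $\ve z$ exactly when $W_\ell\subseteq\sum_{i\in S}R_i$, where $W_\ell:=U^{(1)}\oplus\cdots\oplus U^{(\ell)}$. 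Hence Property~\ref{property:sequential} is equivalent to: $W_\ell\subseteq\sum_{i\in S}R_i$ for all $\ell$ and all $S$ with $|S|=\ell$. Because $\dim R_i\le n$, it then suffices to prove that any subspaces $R_1,\ldots,R_L$ with this property satisfy $\sum_i\dim R_i\ge\sum_j s_j$, since that gives $nL\ge\sum_i\dim R_i\ge\sum_j s_j$, which is \eqref{eq:achievable}.

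Next I would isolate a Singleton-type lemma that accounts for one layer at a time: if $V_1,\ldots,V_L$ are subspaces of $\mathbb R^q$ such that every $d$ of them span $\mathbb R^q$, then $\sum_i\dim V_i\ge q+(L-d)\lceil q/d\rceil$. This follows by ordering the subspaces by dimension: the $d$ smallest already span $\mathbb R^q$, so their dimensions sum to at least $q$, whence the $d$-th smallest has dimension at least $\lceil q/d\rceil$ and so do all larger ones; summing gives the bound. A short case check against \eqref{eq:si} shows $q+(L-d)\lceil q/d\rceil=s_j$ when $(q,d)=(k_j,j)$.

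I would then prove $\sum_i\dim R_i\ge\sum_j s_j$ by induction on $b:=\max\{j:k_j>0\}$ (the claim being vacuous otherwise). For $b=1$, $W_1=\mathbb R^K$, so the level-one condition forces every $R_i=\mathbb R^K$ and $\sum_i\dim R_i=Lk_1=s_1$. For $b\ge2$, let $\phi^\perp:\mathbb R^K\to W_{b-1}$ and $\phi:\mathbb R^K\to U^{(b)}$ be the complementary coordinate projections (note $\mathbb R^K=W_{b-1}\oplus U^{(b)}$ since $b$ is the last nonzero layer). Projecting the conditions through $\phi^\perp$ shows that $\{\phi^\perp(R_i)\}$ satisfies the analogous conditions for the reduced configuration $(k_1,\ldots,k_{b-1},0,\ldots,0)$, whose last nonzero layer is below $b$, so the induction hypothesis yields $\sum_i\dim\phi^\perp(R_i)\ge\sum_{j<b}s_j$. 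For the contribution of layer $b$ I would use that any $b$ processors jointly recover everything ($\sum_{i\in S}R_i=\mathbb R^K$ for $|S|=b$), together with the level-$(b-1)$ condition ($\sum_{i\in T}R_i\supseteq W_{b-1}$ for $|T|=b-1$), to argue that the traces $R_i\cap U^{(b)}$ have the property that any $b$ of them span $U^{(b)}$; the Singleton lemma then gives $\sum_i\dim(R_i\cap U^{(b)})\ge s_b$. Since $\dim R_i=\dim\phi^\perp(R_i)+\dim(R_i\cap U^{(b)})$ by rank--nullity, adding the two estimates closes the induction.

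The hard part is the very last claim of the inductive step. In general $\dim\phi(R_i)$ is strictly larger than $\dim(R_i\cap U^{(b)})$ as soon as $R_i$ contains a vector ``mixing'' layer $b$ with an earlier layer, and the Singleton lemma naturally bounds the projections $\phi(R_i)$ rather than the traces $R_i\cap U^{(b)}$ — while replacing traces by projections makes the additivity fail in the wrong direction. So the crux is to show that such mixing cannot help: either that among schemes of minimum total dimension one may assume each $R_i$ splits as a subspace of $W_{b-1}$ plus a subspace of $U^{(b)}$ (an exchange-type argument), or, more directly, that the traces $R_i\cap U^{(b)}$ already inherit the ``any $b$ span $U^{(b)}$'' property by playing the level-$b$ constraint off against the level-$(b-1)$ constraint. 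Making this precise is where the real work lies; the reduction, the Singleton lemma, and the bookkeeping in the inductive step are all routine.
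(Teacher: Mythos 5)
Your reduction to subspaces (rowspaces $R_i$ of the $\ve G_i$, the condition $W_\ell\subseteq\sum_{i\in S}R_i$ for all $|S|=\ell$, and the observation that it suffices to show $\sum_i\dim R_i\ge\sum_j s_j$) is sound, and your Singleton-type lemma is the same counting bound the paper gets from its integer program (the two formulas agree, as you check). But the inductive step has a genuine gap, and you have correctly located it yourself: the ``more direct'' of your two proposed fixes is in fact false. The traces $R_i\cap U^{(b)}$ do \emph{not} inherit the ``any $b$ span $U^{(b)}$'' property. Concretely, take $L=3$, $(k_1,k_2,k_3)=(0,2,3)$, so $K=5$, $W_2=\langle e_1,e_2\rangle$, $U^{(3)}=\langle e_3,e_4,e_5\rangle$, and set $R_1=\langle e_1,e_2,e_3\rangle$, $R_2=\langle e_1,e_2,e_4\rangle$, $R_3=\langle e_1+e_5,\,e_2\rangle$. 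Every pair of these contains $W_2$ and all three together span $\mathbb R^5$, so the sequential conditions hold; yet $R_3\cap U^{(3)}=0$ and the three traces only span $\langle e_3,e_4\rangle\subsetneq U^{(3)}$. (The final inequality is not violated here because the projections $\phi^\perp(R_i)$ overshoot the inductive bound for the lower layers, which is exactly the additivity problem you anticipated: a mixed vector's dimension is ``credited'' to $W_{b-1}$ by rank--nullity, while it is still needed to cover $U^{(b)}$.) So the induction cannot be closed by the trace argument as stated.

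That leaves your other route, an exchange/normalization argument showing that a scheme of minimum total dimension may be assumed to split, i.e., each $R_i$ is spanned by vectors lying in single layers $U^{(j)}$ --- equivalently, each row of $\ve B_i$ encodes rows of only one $\ve A_j$. This is where the real content of the converse lies, and it is worth knowing that the paper's own argument (explicitly a proof sketch) asserts precisely this reduction with ``it can be argued that'' and does not prove it either; after that reduction, your per-layer Singleton lemma and the paper's counting of the row budgets $n_{i,\ell}$ with the constraint $\sum_{\ell\in\mathcal T}n_{i,\ell}\ge k_i$ for all $|\mathcal T|=i$ are the same computation. So your proposal is essentially the paper's argument recast in subspace language, with the same missing step; to complete it you would need to actually carry out the exchange argument (e.g., show that replacing a mixed generator of $R_i$ by its component in a single layer, chosen suitably, preserves all the covering conditions without increasing $\sum_i\dim R_i$), and as your counterexample-prone alternative shows, there is no shortcut around it via traces.
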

\begin{proof}
The proof is given in  Appendix.
\end{proof}

\section{The Sequential Approximation Method: Examples}
\label{sec:application}
Equipped with the coded sequential computation scheme described in Section \ref{sec:coding}, the sequential approximation method in Algorithm \ref{al:sequential_approximation} can be implemented where the matrix-vector multiplication $\ve H_{(r)}\ve x^k$ is computed in  a sequential manner. We demonstrate this method by considering the  Lasso problem in its standard form
\begin{align}
\text{minimize } \frac{1}{2}\norm{\ve F\ve x-\ve b}_2^2+\gamma\norm{\ve x}_1
\label{eq:lasso}
\end{align}
for a matrix $\ve F\in\mathbb R^{w\times m}$ and $\gamma\geq 0$.  This corresponds to the optimization problem in (\ref{eq:objective}) by identifying $\ve H=\ve F^T\ve F, \ve h = \ve F^T\ve b$ and $g(\ve x) = \gamma \norm{\ve x}_1$. The corresponding proximal gradient method for this problem is given by
\begin{align}
\ve x^{k+1}=S_\gamma(\ve x^{k} - t^k(\ve F^T\ve F\ve x^k-\ve F^T\ve b))
\label{eq:proximal_lasso}
\end{align}
where $S_\gamma(x)$ is the \textit{soft-thresholding operator} defined as
\begin{align*}
S_\gamma(x)=
\begin{cases}
x_i-\gamma &x_i\geq \gamma\\
0 &|x_i|\leq \gamma\\
x_i+\gamma &x_i\leq -\gamma
\end{cases}.
\end{align*}
For this problem, the soft-thresholding operator is very simple, and the most computationally expensive step in the algorithm is the matrix-vector multiplication $\ve F^T\ve F  \ve x^k$ for each step $k$.

\subsection{Approximations}

Instead of computing $\ve F^T \ve F\ve x^k$  at each step $k$, we use the proposed sequential distributed computation scheme. Similar to \cite{lee_speeding_2015},  we frist focus on computing the term $\ve F\ve x^k$. Using Algorithm \ref{al:sequential_approximation}  to solve  the above problem requires a specification of $\ve F_{(r)}$. A priori, $\ve F_{(r)}$ could be chosen in any way, as long as the multiplication $\ve F_{(r)}\ve x^k$ requires less computation than $\ve F\ve x^k$. Similar to \cite{ferdinand_anytime_2016}, in this paper we  choose $\ve F_{(r)}$ to be a low rank approximation of $\ve F$ using singular value decomposition:
\begin{align*}
\ve F=\ve U\Sigma \ve V^T=\sum_{i=1}^d\sigma_i \ve u_i \ve v_i^T
\end{align*}
where $\sigma_1\geq \sigma_2\geq \ldots\geq\sigma_d$ denote the singular values of $\ve F$ with rank $d$.   $\ve u_i\in\mathbb R^w, \ve v_i\in\mathbb R^m$ are the $i$-th column of $\ve U$ and $\ve V$, respectively. In particular, we  choose $\ve F_{(r)}$ as
\begin{align}
\ve F_{(r)}:=\sum_{i=1}^{\text{rank}(
\ve F_{(r)})}\sigma_i \ve u_i\ve v_i^T
\label{eq:A_r}
\end{align}
for some $1\leq \text{rank}(
\ve F_{(r)})\leq d$. Namely $\ve F_{(r)}$  captures the largest $\text{rank}(
\ve F_{(r)})$ singular values  of $\ve F$.

Define $\ve A_1:=[\ve v_1^T;  \ldots ;\ve v_{k_1}^T]\in\mathbb R^{k_1\times m}, \ve A_2:=[\ve v_{k_1+1}^T;  \ldots ;\ve v_{k_1+k_2}^T]\in\mathbb R^{k_2\times m}, \ldots, \ve A_L:=[\ve v_{d-k_L+1}^T; \ldots ; \ve v_d^T]\in\mathbb R^{k_L\times m}$ for a chosen configuration $(k_1,\ldots, k_L)$ which satisfies the condition (\ref{eq:achievable}).  Using the coding scheme described in Section \ref{sec:coding_scheme}, we generate $L$ matrices $\ve B_1,\ldots, \ve B_L \in\mathbb R^{n\times m}$ based on $\ve A_1,\ldots, \ve A_L$ for the $L$ processors.

\begin{algorithm}
\caption{Subroutine for computing $\ve F_{(r)}\ve x^k$: processor $i$} 
\begin{algorithmic}[1]
 
\Require $\ve B_i$ is the encoded matrix stored in processor $i$
\State Receive an input vector $\ve x^k$  
\State Compute $\ve y_i:=\ve B_i\ve x^k$
\State Send $\ve y_i$ back to the user
\end{algorithmic}
\label{al:subroutine_processor}
\end{algorithm}

\begin{algorithm}
\caption{Subroutine for computing $\ve F_{(r)}\ve x^k$: user}
\begin{algorithmic}[1]
 

\State Send a vector $\ve x^k$ to all processors
\State Wait until $\ell$ processors finish, where $\sum_{i=1}^\ell k_i\geq \text{rank}(\ve F_{(r)})$
\State Decode $\ve F_{(r)}\ve x^k$ using $\ve y_{i_1},\ldots, \ve y_{i_\ell}$ from $\ell$ processors
\end{algorithmic}
\label{al:subroutine_master}
\end{algorithm}

When executing the algorithm, the vector $\ve x^k$ is given to all processors at each time step $k$. The coding scheme guarantees that with the computation results $\ve y_i:=\ve B_i \ve x^k$ from  any $\ell$ processors $(1\leq \ell\leq L)$, the user can  recover the multiplication result
\begin{align}
\ve t_\ell:=\begin{pmatrix}
\ve A_1\\
\vdots\\
\ve A_\ell
\end{pmatrix}\ve x^k =\begin{pmatrix}
\ve v_1^T\\
\vdots\\
\ve v_{h_\ell}^T
\end{pmatrix}\ve x^k \in\mathbb R^{h_\ell}
\label{eq:t_l}
\end{align}
where we define $h_\ell:=\sum_{i=1}^\ell k_i$. An extra multiplication gives $\ve F_{(r)}\ve x^k$:
\begin{align}
[\sigma_1\ve u_1,\ldots, \sigma_{h_\ell}\ve u_{h_\ell}]\ve t_\ell=\sum_{i=1}^{h_\ell} \sigma_i \ve u_i\ve v_i^T\ve x^k=\ve F_{(r)}\ve x^k
\label{eq:recover}
\end{align}
where  $\text{rank}(\ve F_{(r)})=h_\ell$ in this case.  We point out that in the lasso problem, the matrix $\ve F$ in general has much more columns than rows, hence the above multiplication (\ref{eq:recover}) takes less computation. Moreover, it can also be done using the distributed system in a similar way. By treating $\ve F_{(r)}\ve x^k$ as the vector to be multiplied, the user can distribute the multiplication $\ve F^T_{(r)}\ve F_{(r)}\ve x^k$ in the same way. Hence there are two computation steps for each iteration. We omit the details of the second step in this paper.

The subroutines for processors and for the user are given in Algorithm \ref{al:subroutine_processor} and Algorithm \ref{al:subroutine_master}, respectively. We point out that the subroutine for processors does not change for different approximation level (different $\ve F_{(r)}$),  and  only the user needs to adjust its procedure to adapt to different approximation levels.

\subsection{Computation time}
\label{subsec:rational}

As mentioned in Section \ref{sec:overview},  the reason for adopting the sequential approximation method is that it is faster to obtain a low rank approximation $\ve F_{(r)}\ve x^k$ than obtaining the exact answer $\ve F\ve x^k$ in a distributed system with stragglers. More precisely, let $T_i$ denote the random computation time of processor $i$ and let $T_{(\ell)}$ denote the $\ell$-th order statistic, i.e.
\begin{align*}
T_{(\ell)}:=\text{$\ell$-th smallest of }\{T_1,\ldots, T_L\}.
\end{align*}
The time for recovering the result $\ve F_{(r)}\ve x^k$ is given by $T_{(\ell)}$ where $\ell$ satisfies
\begin{align*}
\sum_{i=1}^\ell k_i\geq \text{rank}(\ve H_{(r)}),
\end{align*}
while computing the exact answer $\ve F\ve x^k$ requires time $T_{(\ell')}$ where $\ell'$ satisfies
\begin{align*}
\sum_{i=1}^{\ell'} k_i\geq \text{rank}(\ve F).
\end{align*}
Notice that we always have $\ell \leq \ell'$.  In a distributed system with stragglers, $T_{(\ell')}$ could be significantly larger than $T_{(\ell)}$ if $\ell'$ is larger than $\ell$.

In summary, although  Algorithm \ref{al:sequential_approximation} starts with ``incorrect" iterations with approximation matrices $\ve F_{(r)}$, the computation time for those iterations are shorter than using the exact matrix $\ve F$. If we choose approximations $\ve F_{(r)}$ judiciously (approaching $\ve F$ gradually), the variable $\ve x^k$ will approach the optimal solution, but with a  shorter computation time.

\subsection{Choices of parameters}
There are many free parameters to choose for the sequential approximation algorithm, including the approximation matrices $\ve F_{(r)}$, the number of different approximation levels $R$, and the number of iterations $T_r, r=1,\ldots, R$.  They should be chosen in a way such that the algorithm can be implemented with the given system parameters $(L,n)$. In other words, if we choose $\ve F_{(r)}$ as in (\ref{eq:A_r}), there should exist a feasible configuration $(k_1,\ldots, k_L)$ which both satisfy  condition  (\ref{eq:achievable}) and the condition
\begin{align}
\sum_{i=1}^{\ell(r)} k_i\geq \text{rank}(\ve F_{(r)})
\label{eq:feasible_rank}
\end{align}
for every $r=1,\ldots, R$ with some $\ell(r)\in [1:L]$ satisfying $\ell(r_1)\geq \ell(r_2)$ if $r_1\geq r_2$. Recall that these choices will affect the running time of each iteration as shown in Subsection \ref{subsec:rational}, hence we could optimize the overall execution time with respect to these parameters.

More importantly, choosing $\ve F_{(r)}$ as (\ref{eq:A_r}) is only one way to approximate the matrix $\ve F$. There may exist other choices of $\ve F_{(r)}$ which provide a better convergence performance. The problem of choosing good approximations will be addressed in the future work.

\subsection{Numerical results}

In this section, we present  numerical results to demonstrate the sequential approximation method with the Lasso problem in (\ref{eq:lasso}). Specifically, we consider the  proximal gradient algorithm in (\ref{eq:proximal_lasso}). 

\begin{figure}[!h]
\centering
\includegraphics[scale=0.45]{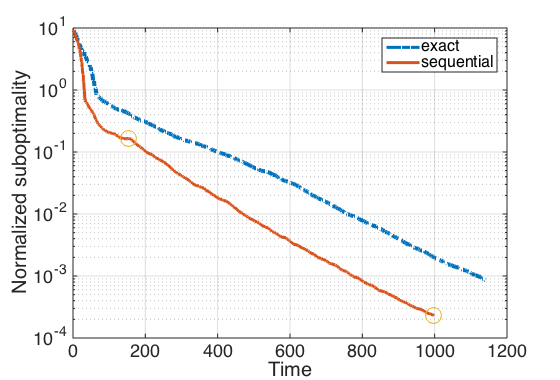}
\caption{Numerical results for Example 1, showing the normalized suboptimality $\norm{x^k-x^*}/\norm{x^*}$ verses overall computation time.  The solid line denotes the sequential approximation algorithm and the dashed line denotes the original proximal gradient method.   A black circles denotes the end of one approximation level. To reach a normalized suboptimality $10^{-3}$, the sequential approximation is roughly $30\%$ faster than the original algorithm. }
\label{fig:ex_1}
\end{figure}

As a toy example, we assume that the distributed system has $4$ processors where each processor can compute a matrix-vector multiplication $\ve B\ve z$ with $\ve B\in\mathbb R^{10\times m}$, namely  $(L=4, n=10)$. The $\ve F$ matrix in the Lasso problem has dimension $38\times m$  with rank $38$ and $m=500$.  In our simulations, the matrix $\ve F$ is chosen randomly, and the regularization coefficient $\gamma$ is chosen to be $5$. The computation time of each processor is assumed to have an exponential distribution with the density function $p(x) = \lambda \exp(-\lambda x)$ with the choice $\lambda=1$.

\begin{figure}[!htb]
\centering
\includegraphics[scale=0.45]{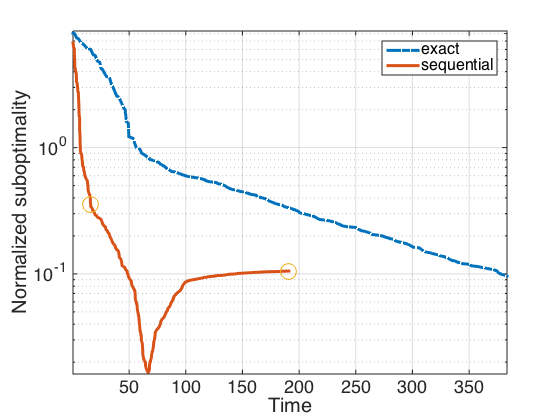}
\caption{Numerical results for Example 2, showing the normalized suboptimality $\norm{x^k-x^*}/\norm{x^*}$ verses overall computation time.   The solid line denotes the sequential approximation algorithm and the dashed line denotes the original proximal gradient method. In this example, the algoritm never uses the exact matrix $\ve F$, hence $\ve x^k$ does not converge to the  optimal solution of the original proble, but an approximate soultion with a suboptimality $0.1$.  In this case, the sequential approximation algorithm is almost two times faster than the original algorithm to reach this suboptimality.}
\label{fig:ex_2}
\end{figure}

\textit{Example 1:} We use two approximation levels where $\ve F_{(1)}, \ve F_{(2)}$ are chosen according to (\ref{eq:A_r}) with $\text{rank}(\ve F_{(1)})=6$ and $\text{rank}(\ve F_{(2)})= 38$ (i.e., $\ve F_{(2)}=\ve F$). It can be checked straightforwardly this particular choice can be implemented with a feasible configuration $(k_1=k_2=0, k_3=6, k_4=32)$.  With this choice, the matrix-vector multiplication $\ve F_{(1)}\ve x^k$ is obtained when three processors finish their tasks. It can be seen that if the user needs the exact result $\ve F\ve x^k$, he must wait for all four processors to finish. It can be calculated that the average waiting time for three processors is approximately $1.08$, while the averate waiting time for four processors is $2.08$.  The  performance of the algorithm is given in Figure \ref{fig:ex_1}. 

\textit{Example 2 (approximate solution):} This approach is also useful when we only aim to obtain an approximate solution of the problem. In this example we execute the sequential approximation algorithm with  two levels of approximation where $\text{rank}(\ve F_{(1)})=5$ and $\text{rank}(\ve F_{(2)})= 15$.  This is implemented with a configuration $(k_1=5, k_2=10, k_3=0, k_4=0)$.  In other words, the user can recover $\ve F_{(1)}\ve x^k$ when any processor returns the result and can recover $\ve F_{(2)}\ve x^k$ when any $2$ processors return the results.  The average waiting time for one processor is approximately $0.25$ and the average waiting time for two processors is approximately $0.58$. Notice that in this case, the sequential approximation algorithm cannot converge to the true minimizer since it does not use the exact matrix $\ve F$ in any level.   Nevertheless the simulation results   show that it gives a fairly good approximate solution.   The  performance of the algorithm is given in Figure \ref{fig:ex_2}.

\section{Appendix}

\begin{proof}[Proof  of Theorem \ref{thm:coding_scheme}]
It is easy to see that each matrix $\ve A_i$ is encoded into at most $\lfloor k_i/i\rfloor +1$ matrices $\tilde {\ve A}^{(1)},\ldots,\tilde {\ve A}^{(\lfloor k_i/i\rfloor +1)}$, whose total number of rows sums up to $s_i$ as defined in (\ref{eq:si}). Hence all the encoded data can be accommodated in the $L$ matrices if it satisfies $\sum_{i=1}^Ls_i\leq nL$.

Now we argue that this coding scheme possesses the sequential  property in Property \ref{property:sequential}. Since the rows of $\ve{A}_i^{(j)}, j=1,\ldots, \lfloor k_i/i\rfloor$ are encoded with an $(L,i)$ MDS code into the matrix $\tilde {\ve A}_i^{(j)}$, it is to see that any $i$ entries of the matrix-vector multiplication $\tilde{\ve A}_i^{(j)}\ve x$ allow us to recover $\ve A_i^{(j)}\ve x$. Recall that each row of $\tilde {\ve A}_i^{(j)}$ is distribtued to one processor, hence if $i$ processors return the their matrix-multiplication results, the user can recover $\ve A_i^{(j)}\ve x$, and this holds for all $j=1,\ldots, \lfloor k_i/i\rfloor$. 

If $i$ does not divide $k_i$,  we have the extra matrix $\ve A_i^{(\lfloor k_i/i\rfloor+1)}$ whose rows are encoded using an $((L-i+mod(k_i,i), mod(k_i,i))$ MDS codes into $\tilde{\ve A}_i^{(\lfloor k_i/i\rfloor+1)}$ with $(L-i+mod(k_i,i)$ rows, and each row is distrbued to one processor. It can be seen that any $i$ processors will contain at least $mod(k_i,i)$ rows of $\tilde{\ve A}_i^{(\lfloor k_i/i\rfloor+1)}$. Indeed, there are $i-mod(k_i,i)$ processors which do not store any row of $\tilde{\ve A}_i^{(\lfloor k_i/i\rfloor+1)}$. The worst case is when a subset of  $i$ processors include all the processors which do not contain rows of $\tilde{\ve A}_i^{(\lfloor k_i/i\rfloor+1)}$. Even in this case, we have the remaining $mod(k_i,i)$ processors which contain rows of $\tilde{\ve A}_i^{(\lfloor k_i/i\rfloor+1)}$.  This shows that it is always possible to recover $\ve A_i^{(j)}\ve x, j=1,\ldots, \lfloor k_i/i\rfloor$ and $ \ve A_i^{(\lfloor k_i/i\rfloor+1)}\ve x$ (hence $\ve A_i\ve x$) with any $i$ processors. This argument holds for all $i, i=1,\ldots, L$ and thus concludes the proof.
\end{proof}

\begin{proof}[Proof sketch of Theorem \ref{thm:converse}]
We will only consider linear coding scheme in this paper where the matrix $\ve B_i$ is a linear function of matrices $\ve A_1,\ldots, \ve A_L$. It can be argued  that, any linear coding scheme can be reduced to a scheme where each row of $\ve B_i$  only consists of rows of one matrix $\ve A_i$ (as shown in the Example in Section \ref{sec:coding_scheme}). In other words, any linear scheme  can be equivalently implemented as
\begin{align}
\ve B_i=\ve G_i\begin{pmatrix}
\ve A_1\\
\vdots\\
\ve A_L
\end{pmatrix}
\label{eq:linear_scheme}
\end{align}
where  $\ve G_i\in\mathbb R^{n\times \sum_{i=1}^L k_i}$ is  of the form
\begin{align*}
\ve G_i=\begin{pmatrix}
\ve G_{i,1} &\ve 0 &\ldots &\ve 0\\
 \ve 0& \ve G_{i,2} &\ldots &\ve 0\\
 \vdots &\vdots &\vdots &\vdots\\
 \ve 0 &\ve 0 &\ldots &\ve G_{i,L}
\end{pmatrix}
\end{align*}
for some $\ve G_{i,j}\in\mathbb R^{n_j\times k_i}$ where $\sum_{j=1}^Ln_j=n$. 

Since each row of $\ve B_\ell$ is used for encoding only one matrix $\ve A_i$, we use $n_{i,\ell}$ to denote the number of rows used for encoding $\ve A_i$ in processor $\ell$ (which contains $\ve B_\ell$). Let $s_i$ be the number of rows used for the matrix $\ve A_i$ across all $L$ processors, i.e., $s_i=\sum_{\ell=1}^L n_{i,\ell}$.  Now we show that in order to have a valid coding scheme for the distributed storage problem, $s_i$ should satisfy
\begin{align*}
s_i\geq \begin{cases}
\frac{k_i}{i}L \quad &\text{$i$ divides $k_i$}\\
\lfloor\frac{k_i}{i}\rfloor \cdot L+ L-i+mod (k_i,i)\quad &\text{otherwise}
\end{cases}
\end{align*}
which matches the achievable coding scheme in (\ref{eq:si}).

First notice that in order to  recover $\ve A_i\ve z$ using any $i$ processors, a necessary condition on  $n_{i,\ell}$  is  
\begin{align}
\sum_{\ell\in \mathcal T}n_{i,\ell}\geq k_i
\end{align}
for any subset $\mathcal T\subseteq[L]$ with $|\mathcal T|=i$, simply because $\ve A_i\ve z$ has length $k_i$. Hence a lower bound $\hat s_i$ on $s_i$ is given by the following optimization problem
\begin{align}
\text{minimize }&\hat s_i:=\sum_{\ell=1}^L n_{i,\ell} \label{eq:optimize}\\
\text{s . t. }& \sum_{\ell\in \mathcal T}n_{i,\ell}\geq k_i\text{ for all } \mathcal T\subseteq[L] \text{ with }  |\mathcal T|=i \nonumber \\
&n_{i,l}\in\mathbb Z \text{ for all }\ell\in[L]
\end{align}

If we ignore the integer constraint on $n_{i,\ell}$, it is easy to see that the optimal solution to the relaxed linear programming problem is given by
\begin{align*}
n_{i,\ell}^*=k_i/i \text{ for all }\ell\in[L]
\end{align*}
and the lower bound $\hat s_i$ is equal to $Lk_i/i$. In the special case when $i$ divides $k_i$, the optimal solution $n_{i,\ell}^*$ is an integer hence is also an optimal solution to the original problem (\ref{eq:optimize}). This shows that we have $s_i\geq \hat s_i=Lk_i/i$ if $i$ divides $k_i$.

If $i$ does not divide $k_i$,  it can be argued that,  due to the complete symmetry of the problem (\ref{eq:optimize}), the optimal solution of the integer programming problem (\ref{eq:optimize}) satisfies
\begin{align*}
n_{i,\ell}^*=\lceil k_i/i \rceil \text{ or } n_{i,\ell}^*=\lfloor k_i/i\rfloor \text{ for all }\ell\in [L]
\end{align*}
Moreover, at most $\alpha:=i-mod(k_i,i)$ among the $L$ processors  are allowed to dedicate $\lfloor k_i/i \rfloor$ rows to $\ve a_i$, and all other processors must dedicate $\lceil k_i/i \rceil$ rows to $\ve A_i$.        Indeed, if  we have $n_{i,\ell}^*=\lfloor k_i/i\rfloor$ for all $\ell\in \mathcal T'\subseteq[L]$ for some set $\mathcal T'$ with  $|\mathcal T'|=\alpha+p$ for some $p\geq 1$, then for a set $\mathcal T\supseteq \mathcal T'$ we have
\begin{align*}
\sum_{\ell\in\mathcal T}n_{i,\ell}&=\sum_{\ell\in\mathcal T'}n_{i,\ell}+\sum_{\ell\in\mathcal T\backslash \mathcal T'}n_{i,\ell}\\
&=(\alpha+p)\lfloor k_i/i \rfloor+(i-(\alpha+p))(\lfloor k_i/i \rfloor+1)\\
&=i\lfloor k_i/i \rfloor+mod(k_i,i)-p\\
&<i\lfloor k_i/i \rfloor+mod(k_i,i)=k_i,
 \end{align*}
hence not be able to recover $\ve A_i\ve z$. We conclude that $\hat s_i$ must satisfy
\begin{align*}
\hat s_i= \sum_{\ell=1}^Ln_{i,\ell}&\geq \alpha \lfloor k_i/i \rfloor+(L-\alpha)(\lfloor k_i/i \rfloor+1)\\
&=L\lfloor k_i/i \rfloor+L-i+mod(k_i,i)
\end{align*}
This shows a lower bound on $s_i$ for the case when $i$ does not divides $k_i$. 

\end{proof}

\bibliographystyle{IEEEtran}
\bibliography{DistrComp}

\end{document}